\newtheorem{lemma}{Lemma}
\begin{document}
\begin{center}
	{\Large\textbf{On Local Heat Kernel}}
	\vspace{0.5cm}
	
	{\large Aleksandr~V.~Ivanov}
	
	\vspace{0.5cm}
	
	{\it St. Petersburg Department of Steklov Mathematical Institute of
		Russian Academy of Sciences,}\\{\it 27 Fontanka, St. Petersburg 191023, Russia}\\
	{\it Leonhard Euler International Mathematical Institute, 10 Pesochnaya nab.,}\\
	{\it St. Petersburg 197022, Russia}\\
	{\it E-mail: regul1@mail.ru}
\end{center}
\date{\vskip 20mm}
\begin{abstract}
The paper is devoted to a local heat kernel, which is a special part of the standard heat kernel. Locality means that all considerations are produced in an open convex set of a smooth Riemannian manifold. We study such properties and concepts as uniqueness, a symmetry of the Seeley--DeWitt coefficients, extension on the whole manifold, a family of special functions, and the late-time asymptotics with the use of the path integral formulation.
\end{abstract}
\vskip 5mm
\small
\noindent\textbf{Key words and phrases:} Synge's world function, heat kernel, Seeley--DeWitt coefficient, Laplace operator, Riemannian manifold, late-time asymptotics
\normalsize
\tableofcontents

\section{Introduction}
Heat kernels play important roles in the modern theoretical physics and mathematics, see \cite{10,7,103,fuvas}. They appear in very different areas, for example, in the Atiyah--Patodi--Singer theorem \cite{1-2,2-2} or in the renormalization of quantum field models \cite{12,13,Ivanov-Kharuk-2020,Iv-Kh-2022}. A wider range of their applications can be found in \cite{vas1,33}. It is very well known that their explicit formulae are possible only in some special cases, so this fact results in the necessity of working with asymptotic expansions, coefficients of which can be calculated recurrently. One of the expansions is a base element of the proper time method \cite{1,108}, which allows us to investigate spectral functions for elliptic operators, for instance, Laplace-type or Dirac ones.
Unfortunately, in papers devoted to applications the boundary between the standard heat kernel and the main part of its asymptotic expansion is erased, because their difference does not give a contribution in the final answers. However, they are unequal objects, and it would be helpful to keep in the mind their features.

In this paper, we study a local heat kernel \cite{iv-kh-2022} of the Laplace-type operator (\ref{op}), which is a special part of the asymptotic expansion of the standard heat kernel for small values of the proper time. In our case the locality means that we work in some smooth open convex (normal neighborhood of each of its points) set $U$ of a smooth Riemannian manifold $\mathcal{M}$. In particular, the last condition means that information from $\mathcal{M}\setminus U$ and any boundary conditions do not affect the local heat kernel. The main aim of the work is to discuss such properties as uniqueness, symmetry of the coefficients, extension on $\mathcal{M}$, family of special functions, and the late-time asymptotics.

The paper has the following structure. In Section \ref{defdef}, we introduce the local heat kernel and describe his difference from the standard heat kernel. Then, in Section \ref{propprop}, we study a number of properties mentioned above, prove lemmas, and derive some additional formulae. In the conclusion, we discuss the results and formulate some interesting tasks for further work.

\section{Definition}
\label{defdef}

Let $\mathcal{M}$ be a $d$-dimensional smooth Riemannian manifold, and $\mathcal{A}$ is its atlas. Here, the smoothness means that the manifold is Hausdorff and paracompact as well. For clarity, we give definitions for local objects in a coordinate chart $(U_\alpha,\phi_\alpha)\in\mathcal{A}$.
For example, $g^{\mu\nu}(x)$ with $\mu,\nu\in\{1,\ldots,d\}$ is the metric tensor at $\phi_\alpha^{-1}(x)=p\in U_\alpha\subset\mathcal{M}$. Of course, the last matrix-valued operator is symmetric and real-valued. Then, we introduce a Hermitian smooth vector bundle $\mathcal{H}$ over $\mathcal{M}$ and the corresponding components $B_\mu(x)$ of the Yang--Mills connection 1-form. After that we can 
define a Laplace operator in local coordinates
\begin{equation}
\label{op}
A(x)=-g^{-1/2}(x)D_{x^\mu} g^{1/2}(x)g^{\mu\nu}(x)D_{x^\nu}-v(x).
\end{equation}
Here $D_{x^\mu}=\partial_{x^\mu}+B_\mu(x)$ is the covariant derivative, $v(x)$ is a smooth matrix-valued Hermitian potential, and $g(x)$ is the metric tensor determinant. We assume that our Laplace operator is equipped with appropriate boundary conditions on $\mathcal{M}$, such that the spectral problem is well-posed and the operator is symmetric. In addition, we assume that the operator coefficients have non-zero convergence radius of the Taylor series at each point of $\mathcal{M}$. Moreover we require that this radius is always more than a fixed positive constant.

After that we can formulate a problem for a standard heat kernel $\hat{K}(p,q;\tau)$. Let $p,q\in\mathcal{M}$, $p\in U_\alpha$ and $q\in U_\beta$, $(U_\alpha,\phi_\alpha)$ and $(U_\beta,\phi_\beta)$ are from the atlas $\mathcal{A}$, then we have
\begin{equation}
	\label{tepl}
\begin{cases}
\big(\partial_\tau+A\big(\phi_\alpha(p)\big)\big)\hat{K}(p,q;\tau)=0\,\,\,\mbox{for all}\,\,\,\tau>0;
		\\
\hat{K}(p,q;0)=g^{-1/2}\big(\phi_\alpha(p)\big)\delta(p-q)\,\,\,\mbox{and boundary conditions}.
\end{cases}
\end{equation}
As we see from the problem statement, the standard heat kernel is a global object and depends on the boundary conditions. Unlike him, the main object of the paper, a local heat kernel, does not inherit the last two properties.

For further consideration, let us introduce an open convex set $U\subset\mathcal{M}$, which is a normal neighborhood of each of its points by definition. Such set contains a unique geodesic segment of any two points $p,q\in U$. According to Proposition 7 from Chapter 5 of \cite{neil}, each point of $\mathcal{M}$ has a convex neighborhood, so we obtain the existence of $U$. For clarity, we assume that $U\subset U_\alpha$, such that $\phi_\alpha(p)=x$ and $\phi_\alpha(q)=y$. Moreover, we require that the set is so small, that all the coefficients would be decomposable in the covariant Taylor series.
Now we are ready to introduce a local heat kernel $K(x,y;\tau)$, which is the solution of the following problem
\begin{equation}
	\label{tepl1}
	\begin{cases}
		\big(\partial_\tau+A(x)\big)K(x,y;\tau)=0\,\,\,\mbox{for all}\,\,\,\tau>0;
		\\
		K(x,y;0)=g^{-1/2}(x)\delta(x-y)\,\,\,\mbox{and asymptotic behaviour at}
		\,\,\,\tau\to+0.
	\end{cases}
\end{equation}
Here, the last condition can be presented as the following series for $\tau\to+0$
\begin{equation}\label{K_0}
	K(x,y;\tau)=\frac{\Delta^{1/2}(x,y)}{(4\pi\tau)^{d/2}}e^{-\sigma(x,y)/2\tau}\sum_{k=0}^{+\infty}\tau^ka_k(x,y),
\end{equation}
where $a_k(x,y)$, $k\geqslant0$, are the Seeley--DeWitt (or Hadamard, Minakshisundaram \cite{111}, and Gilkey \cite{8}) coefficients, see \cite{110,10}. Then, $\sigma(x,y)$ is Synge’s world function \cite{104,1040}, and $\Delta(x,y)$ is Van-Vleck--Morette determinant \cite{105}, which is defined by the formula
\begin{equation}\label{op3}
	\Delta(x,y)=\left(g(x)g(y)\right)^{-1/2}\det\left(-\frac{\partial^2\sigma(x,y)}{\partial x^\mu\partial y^\nu}\right).
\end{equation}

As follows from the definitions, the kernels can be compared only in $U$, because the second one exists only there. Actually, the local heat kernel is included into the standard heat kernel as a part, which leads to the $\delta$-function in the limit $\tau\to+0$. Besides that the standard kernel contains an additional part, which makes the boundary conditions satisfied. So, for $p,q\in U$, mentioned above, we can write
\begin{equation}
\lim_{\tau\to+0}\Big(\hat{K}(\phi_\alpha^{-1}(x),\phi_\alpha^{-1}(y);\tau)-K(x,y;\tau)\Big)=0.
\end{equation}
Let us note that in some special cases both kernels can be equal to each other, for example, for $A(x)=-\partial_{x_\mu}\partial_{x^\mu}$ on $\mathbb{R}^d$. In this case we can take only one convex set $U=\mathbb{R}^d$.

\section{Properties}
\label{propprop}
Most of the functions, described in this section have arguments $x$ and $y$, such that $\phi_\alpha^{-1}(x),\phi_\alpha^{-1}(y)\in U$. Therefore, it is convenient to define simplified notations 
\begin{equation}\label{op1}
K(\tau)=K(x,y;\tau),\,\,\Delta=\Delta(x,y),\,\,
a_k=a_k(x,y),\,\, D_\mu=D_{x^\mu},\,\, A=A(x),
\end{equation}
\begin{equation}\label{op2}
\sigma=\sigma(x,y),\,\,
\sigma_\mu=\partial_{x^\mu}\sigma,\,\,\sigma^\mu=\partial_{x_\mu}\sigma.
\end{equation}

\subsection{Uniqueness and symmetry}
The asymptotic behaviour of the local heat kernel at $\tau\to+0$ is a quite strong restriction, which makes the solution of problem (\ref{tepl1}) to be unique in a sense. Indeed, after substitution of (\ref{K_0}) into the problem we obtain quite remarkable recurrence relations \cite{10}
\begin{equation}\label{rec}
	\sigma^\mu D_\mu a_0=0,\,\,\, a_0|_{y=x}=1,\,\,\,\mbox{and}\,\,\,
	(k+1+\sigma^\mu D_\mu)a_{k+1}=-\Delta^{-1/2}A\Delta^{1/2}a_k,\,\,k\geqslant0.
\end{equation}

As it is known, the first equation for $a_0$ leads to a path-ordered exponential, which is smooth by construction. Then, the equation for $a_1$ contains non-zero right hand side, which is smooth as well. Hence, the answer for $a_1$ can be presented as a sum of a smooth part and a kernel of the operator $(1+\sigma^\mu D_\mu)$ in the form
\begin{equation}\label{op4}
a_1=\mbox{(smooth part)}+\alpha a_0\sigma^{-1/2},
\end{equation}
where $\alpha=\alpha(x,y)$ is a kernel of $\sigma^\mu\partial_\mu$. In the last calculation we have used $\sigma_\mu\sigma^\mu=2\sigma$, see formula (2.20) in \cite{1040}. Keeping only the smooth part, we move on to the next coefficient. Continuing this procedure leads in the k-th order to the form
\begin{equation}\label{op4}
	a_k=\mbox{(smooth part)}+\alpha a_0\sigma^{-k/2},
\end{equation}
and so on. After this consideration we can formulate the result.
\begin{lemma}\label{lll1}
Under the conditions described above, a solution of problem (\ref{tepl1}) in the form of asymptotic series (\ref{K_0}) with smooth coefficients is unique.
\end{lemma}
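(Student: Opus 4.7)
The plan is induction on $k$, using the recurrence (\ref{rec}) together with the smoothness requirement on the coefficients to pin down each Seeley--DeWitt coefficient uniquely on $U\times U$. First, I would verify that substituting (\ref{K_0}) into $(\partial_\tau + A)K = 0$ and matching powers of $\tau$ reproduces precisely the system (\ref{rec}), so that the question reduces to solving this recurrence under the stipulated smoothness. For the base case $k=0$, the equation $\sigma^\mu D_\mu a_0 = 0$ with $a_0|_{y=x}=1$ is a first-order transport equation along each geodesic emanating from $y$: parametrising such a geodesic by its arc length $s$, one has $\sigma = s^2/2$ and $\sigma^\mu\partial_\mu$ acts as $s\partial_s$, so the equation reduces to a parallel-transport ODE whose unique solution is the path-ordered exponential of $-B$ along the geodesic; smoothness on $U\times U$ is inherited from smoothness of the connection.

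For the inductive step, assume $a_0,\ldots,a_k$ are already smooth and uniquely determined; then the right-hand side $F_k := -\Delta^{-1/2}A\Delta^{1/2}a_k$ of the equation for $a_{k+1}$ is smooth on $U\times U$. Two smooth solutions differ by a smooth element $h$ of the kernel of $(k+1+\sigma^\mu D_\mu)$. A direct computation, using $\sigma^\mu D_\mu a_0 = 0$ and $\sigma^\mu\partial_\mu\sigma = 2\sigma$, yields for any $\alpha=\alpha(x,y)$
\begin{equation}
(k+1+\sigma^\mu D_\mu)\bigl(\alpha\,a_0\,\sigma^{-(k+1)/2}\bigr) = (\sigma^\mu\partial_\mu\alpha)\,a_0\,\sigma^{-(k+1)/2},
\end{equation}
and restricting the ODE to geodesics from $y$ (where $(k+1+s\partial_s)\tilde h = 0$ has general solution proportional to $s^{-(k+1)}$) shows that every kernel element has this form. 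The equation $\sigma^\mu\partial_\mu\alpha = 0$ forces $\alpha(\cdot,y)$ to be constant along the radial geodesics from $y$, and hence, by smoothness at the vertex $y$, independent of $x$; since $\sigma^{-(k+1)/2}$ diverges on the diagonal while $a_0(y,y)=\mathrm{id}$, smoothness of $h$ at $\{x=y\}$ then forces $\alpha\equiv 0$. This gives uniqueness of $a_{k+1}$.

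The main obstacle, needed to keep the uniqueness statement non-vacuous, is to exhibit a smooth particular solution of $(k+1+\sigma^\mu D_\mu)a_{k+1} = F_k$. I would write $a_{k+1}=a_0\,b_{k+1}$ and restrict to a geodesic from $y$; the equation becomes $(k+1+s\partial_s)b_{k+1} = a_0^{-1}\tilde F_k$, whose solution
\begin{equation}
b_{k+1}(s) = s^{-(k+1)}\int_0^s t^{k}\,\bigl(a_0^{-1}\tilde F_k\bigr)(t)\,dt
\end{equation}
is smooth at $s=0$ by Taylor-expanding the integrand, since the order-$(k+1)$ zero of the integral cancels the $s^{-(k+1)}$ prefactor; the decomposability hypothesis on the coefficients of $A$ ensures that the required Taylor expansion exists throughout $U$. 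Combined with the kernel analysis above, this closes the induction and proves the lemma.
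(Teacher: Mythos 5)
Your proposal is correct and follows essentially the same route as the paper: substitute the ansatz to get the recurrence (\ref{rec}), identify the homogeneous solutions of $(k+\sigma^\mu D_\mu)$ as $\alpha\,a_0\,\sigma^{-k/2}$ with $\sigma^\mu\partial_\mu\alpha=0$, and observe that smoothness on the diagonal forces $\alpha\equiv0$, so the smooth coefficients are unique. The only difference is that you additionally construct a smooth particular solution via the integral along geodesics; this existence step is not part of the paper's (purely uniqueness) argument, but it does no harm.
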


There are a lot of ways to compute the smooth parts, including simplified particular cases. Unfortunately, this is out of the scope of our paper, so we only list a number of links on some general methods \cite{33,f1,f2,f3,f4,f5,f6,f7,f8}. In addition, we note that an explicit closed formula exists only on the diagonal, $y=x$, thanks to which the Seeley--DeWitt coefficient $a_k|_{y=x}$ can be represented as a finite nonlinear combination of the potential, components of the connection 1-form, metric, and their covariant derivatives.

In the rest of this subsection we want to discuss one unevident property of the Seeley--DeWitt coefficients. It says that the Hermitian conjugation leads to a permutation of the arguments. Actually, there are two ways to prove this relation. The first one \cite{sym1,sym2}, hard way, is related to the analysis of internal functional properties of the coefficients and the recurrence relations from (\ref{rec}), while the second one \cite{sym3} is connected with the analysis of asymptotics for the standard heat kernel at $\tau\to+0$. Unfortunately, the last proof was done only for compact manifolds.

Actually, we can expand the second way on our case, because we work with the local heat kernel on the set $U$. Hence, we need to formulate a problem for the standard heat kernel on $U$, for example, in the following form
\begin{equation}
	\label{tepl2}
	\begin{cases}
		\big(\partial_\tau+A(x)\big)\hat{K}_{U}(x,y;\tau)=0\,\,\mbox{for all}\,\,\, x,y\in U,\,\,\mbox{and}\,\,\,\tau>0;
		\\
		\hat{K}_U(x,y;0)=g^{-1/2}(x)\delta(x-y)\,\,\mbox{and Dirichlet condition on}\,\,\partial\overline{U}.
	\end{cases}
\end{equation}
Then, according to the result from \cite{sym3}, we obtain the symmetry, mentioned above. Therefore, we obtain the symmetry for the local heat kernel, because the Seeley--Dewitt coefficients are unique. So, we can formulate the result.
\begin{lemma}\label{lll3}
Under the conditions described above, we have $\big(a_k(x,y)\big)^\dag=a_k(y,x)$ for $k\geqslant0$, and, in particular,
\begin{equation}\label{op8}
\big(K(x,y;\tau)\big)^\dag=K(y,x;\tau).
\end{equation}
\end{lemma}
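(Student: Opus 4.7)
The plan is to reduce the claim to the Dirichlet problem (\ref{tepl2}) on $U$, where formal self-adjointness of $A$ makes the symmetry of $\hat{K}_U$ immediate, and then transfer the statement to $K$ via the uniqueness established in Lemma \ref{lll1}.

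First, I would observe that $A$ with Dirichlet data on $\partial\overline{U}$ is a self-adjoint operator on square-integrable sections of $\mathcal{H}|_{U}$ with respect to the measure $g^{1/2}(x)\,dx$. Its heat semigroup therefore has a spectral kernel of the form $\hat{K}_U(x,y;\tau)=\sum_n e^{-\lambda_n\tau}\phi_n(x)\otimes\phi_n^\dag(y)$, from which the identity $\big(\hat{K}_U(x,y;\tau)\big)^\dag=\hat{K}_U(y,x;\tau)$ is manifest. Strictly speaking, one should first truncate $U$ to a smoothly bounded precompact subdomain to guarantee discreteness of the spectrum; this is harmless because it does not alter the asymptotic content at interior arguments.

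Second, I would verify that the short-time asymptotics of $\hat{K}_U$ at interior points coincides with the ansatz (\ref{K_0}) and hence with the local heat kernel $K$. Since the geodesic distance from any compact subset of $U$ to $\partial\overline{U}$ is bounded below by a positive constant, a Varadhan-type estimate shows that the Dirichlet boundary perturbs $\hat{K}_U$ only by a term of order $O(e^{-c/\tau})$. Such a correction contributes nothing to any coefficient of $\tau^k$, so Lemma \ref{lll1} forces the polynomial-in-$\tau$ part of $\hat{K}_U$ to use exactly the Seeley--DeWitt coefficients $a_k(x,y)$ defined through (\ref{K_0}). Applying $\dag$ to this expansion and using that $\sigma(x,y)=\sigma(y,x)$, $\Delta(x,y)=\Delta(y,x)$, together with the fact that the prefactor $(4\pi\tau)^{-d/2}\Delta^{1/2}e^{-\sigma/2\tau}$ is scalar and real, simply replaces $a_k(x,y)$ by $\big(a_k(x,y)\big)^\dag$. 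Equating term by term with the expansion of $\hat{K}_U(y,x;\tau)$, whose coefficients are $a_k(y,x)$, and invoking once more the uniqueness of asymptotic coefficients, yields $\big(a_k(x,y)\big)^\dag=a_k(y,x)$ for every $k\geqslant 0$. Substitution into (\ref{K_0}) then gives (\ref{op8}).

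The delicate point is the second step: justifying that the Dirichlet condition on $\partial\overline{U}$ contributes nothing to any polynomial-in-$\tau$ coefficient at interior arguments. This is a standard fact for smooth bounded domains, as exploited in \cite{sym3} in the compact case, but in the present setting one has to first localise to a precompact subdomain strictly inside $U$, and then invoke Gaussian off-diagonal bounds for the Dirichlet heat kernel to control the boundary correction. Once this step is secured, the rest of the argument is algebraic and follows automatically from Lemma \ref{lll1}.
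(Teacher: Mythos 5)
Your proposal is correct and follows essentially the same route as the paper: pass to the Dirichlet heat kernel $\hat{K}_U$ of problem (\ref{tepl2}) on $U$, obtain its symmetry, and transfer it to the local heat kernel through the uniqueness of the Seeley--DeWitt coefficients from Lemma \ref{lll1}. The only difference is that where the paper simply invokes the result of \cite{sym3} for the strong short-time asymptotics and coefficient symmetry, you sketch the underlying argument yourself (spectral symmetry of the self-adjoint Dirichlet realization plus Gaussian/Varadhan-type bounds showing the boundary contributes only $O(e^{-c/\tau})$), which is a legitimate filling-in of the step the paper delegates to that reference.
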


\subsection{Continuation}
In the next stage we want to discuss global properties of the local heat kernel. We know, that it exists and unique in some small convex neighborhood. But what can we say about extension on $\mathcal{M}$? Actually, we have two quite strong restrictions in the general case. The first one is connected with the ability to decompose the coefficients of the operator $A$ into the covariant Taylor series. The second one is related to the need to have only one unique geodesic in each neighborhood for two selected points, because otherwise Synge's world function loses its smoothness.

We offer the following procedure. Let us take an open covering $\mathcal{C}$ of the manifold $\mathcal{M}$, in which any open set $V\in\mathcal{C}$ is included in a set $U_\alpha$ from the atlas $(U_\alpha,\phi_\alpha)\in\mathcal{A}$, and it is such small that the coefficients of the operator $A$ are decomposable in the covariant Taylor series. Further, let us define a convex covering $\mathcal{R}$ of the manifold as a covering of $\mathcal{M}$ by convex open sets, such that
\begin{equation}\label{op6}
\mbox{if}\,\, V_1,V_2\in\mathcal{R}\,\,\mbox{and}\,\,
V_1\cap V_2\neq\varnothing,\,\,\mbox{then}\,\, V_1\cap V_2\,\,\mbox{is convex}.
\end{equation}
After that, using Lemma 10 from Chapter 5 of \cite{neil}, we require that the covering $\mathcal{R}$ has one more property: each element of $\mathcal{R}$ is contained in some element of $\mathcal{C}$.

This means that each element $V$ of $\mathcal{R}$ is an open convex set by construction and it satisfies two restrictions, mentioned above. Hence, to each element of $\mathcal{R}$ we can apply Lemma \ref{lll1}. In addition, the lemma is applicable to $V_1\cap V_2$, if $V_1,V_2\in\mathcal{R}$ and $V_1\cap V_2\neq\varnothing$. This means that we can transfer the local heat kernel from $V_1$ to $V_2$, see Figure \ref{cont}. 
Indeed, let us study this process in details.
We start from $q_1,p_1\in V_1$ and finish at $q_2,p_2\in V_2$. 
Let the corresponding charts be $(U_\alpha,\phi_\alpha)$ and $(U_\beta,\phi_\beta)$, respectively. Then we introduce two paths $\gamma_i:[0,1]\to V_1\cap V_2$, $i=1,2$, such that
\begin{equation}\label{op11}
\gamma_1(0)=q_1,\,\,\gamma_2(0)=p_1,\,\,\gamma_1(1)=q_2,\,\,\gamma_2(1)=p_2,
\end{equation}
\begin{equation}\label{op7}
\mbox{and there is such}\,\,
s\in [0,1]\,\,\mbox{that}\,\,
\gamma_i:[0,s]\to V_1\,\,\mbox{and}\,\,\gamma_i:[s,1]\to V_2.
\end{equation}
Hence, our local heat kernel moves from $V_1$ to $V_2$, when the parameter goes from $0$ to $1$. Note, that at the point $s$ we make the change of the local coordinates.

Moreover, we can transfer the local heat kernel from any $V_1\in\mathcal{R}$ to any $V_2\in\mathcal{R}$, because we can choose a number of elements from $\mathcal{R}$, which connect $V_1$ and $V_2$. Besides that, such movements do not depend on a path, because in each neighborhood the local heat kernel is unique.

Now we can formulate the following result.
\begin{lemma}\label{lll2}
Under the conditions described above, the local heat kernel is unique on
\begin{equation}\label{op5}
\mathfrak{R}=\bigcup_{V\in\mathcal{R}}V\times V.
\end{equation}
\end{lemma}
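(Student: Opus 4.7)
The plan is to define $K$ on $\mathfrak{R}$ by piecing together the locally unique kernels from Lemma \ref{lll1} on each $V\times V$ with $V\in\mathcal{R}$, and then check that overlaps force pairwise agreement. The statement of the lemma is essentially a gluing result, so the work splits into a local existence/uniqueness input (handed to us by Lemma \ref{lll1}) and a compatibility check on intersections.

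First, for every $V\in\mathcal{R}$ I would invoke Lemma \ref{lll1}: by construction $V$ is open and convex (hence a normal neighborhood of each of its points), and it lies inside some element of $\mathcal{C}$, so the coefficients of $A$ admit covariant Taylor expansions on $V$. This produces a unique local heat kernel $K_V(x,y;\tau)$ on $V\times V$. I then set the candidate $K$ on $\mathfrak{R}$ by $K|_{V\times V}=K_V$ and verify well-definedness on overlaps: for $(x,y)\in(V_1\times V_1)\cap(V_2\times V_2)$ both coordinates lie in $V_1\cap V_2$, which is non-empty and convex by property (\ref{op6}) and is contained in an element of $\mathcal{C}$ via $V_1$. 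Hence Lemma \ref{lll1} applies once more and yields a unique local heat kernel $K_{12}$ on $(V_1\cap V_2)\times(V_1\cap V_2)$. Both $K_{V_1}$ and $K_{V_2}$ restrict to solutions of problem (\ref{tepl1}) on this intersection with the asymptotic form (\ref{K_0}) and smooth coefficients, so each must agree with $K_{12}$ and, in particular, with each other at $(x,y)$.

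Uniqueness on $\mathfrak{R}$ then follows immediately: any other candidate $\tilde K$ on $\mathfrak{R}$ solving (\ref{tepl1}) with expansion (\ref{K_0}) and smooth Seeley--DeWitt coefficients restricts on every $V\times V$ to a solution of the same problem, so it must coincide with $K_V$ by Lemma \ref{lll1}, and hence with $K$ throughout $\mathfrak{R}$. The transfer via the paths (\ref{op11})--(\ref{op7}) described before the lemma is just a concrete realization of this gluing: at the parameter value $s$ one changes chart from $(U_\alpha,\phi_\alpha)$ to $(U_\beta,\phi_\beta)$, and the value of the kernel is unaffected because on $V_1\cap V_2$ it is pinned down uniquely. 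Path-independence for a chain $V_1\to V_3\to\dots\to V_2$ is then automatic by iterating pairwise compatibility.

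The step I expect to require most care is confirming that the Taylor-decomposability hypothesis of Lemma \ref{lll1} genuinely descends to the intersections $V_1\cap V_2$; this is precisely where the two-layer construction involving both $\mathcal{R}$ and $\mathcal{C}$ does real work. Without the refinement property placing each $V\in\mathcal{R}$ inside an element of $\mathcal{C}$, one could not re-apply Lemma \ref{lll1} on $V_1\cap V_2$, and the pairwise agreement argument would break down at exactly the gluing step.
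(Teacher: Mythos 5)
Your proposal is correct and follows essentially the same route as the paper: apply Lemma \ref{lll1} on each $V\in\mathcal{R}$ and on the convex non-empty intersections $V_1\cap V_2$ (which inherit the Taylor-decomposability by sitting inside an element of $\mathcal{C}$), and conclude agreement on overlaps, hence uniqueness on $\mathfrak{R}$. The paper merely phrases the gluing as a transfer along the paths $\gamma_i$ with a chart change at the parameter $s$, which, as you note, is a concrete realization of the same overlap argument, with path-independence following from the local uniqueness.
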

\begin{figure}[h]
	\centerline{\includegraphics[width=0.3\linewidth]{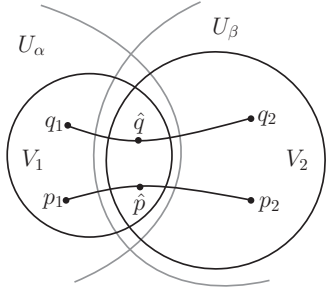}}
	\caption{Arguments transfer of the local heat kernel from $q_1,p_1\in V_1\subset U_\alpha$ to $q_2,p_2\in V_2\subset U_\beta$ along the curves $\gamma_i:[0,1]\to V_1\cup V_2$, $i=1,2$. Here, $\hat{q}=\gamma_1(s)$ and $\hat{p}=\gamma_2(s)$ for some $s\in[0,1]$.}
	\label{cont}
\end{figure}

Similar discussions were written in the recent paper \cite{moret} devoted to Synge's world function. We have expanded them on the local heat kernel. Let us additionally note, that $\mathfrak{R}$ is actually a neighborhood of the diagonal of $\mathcal{M}\times\mathcal{M}$. It is not unique, this means that we can construct another set $\mathcal{R}^{\prime}$, satisfying properties mentioned above. In this case we can get the third set $\mathcal{R}^{\prime\prime}$, such that
\begin{equation}\label{op9}
	\mbox{any}\,\,V^{\prime\prime}\in\mathcal{R}^{\prime\prime}\,\,
	\mbox{lies in}\,\,V\cap V^\prime\,\,\mbox{for some}\,\,
	V\in\mathcal{R}\,\,\mbox{and}\,\,V^\prime\in\mathcal{R}^\prime.
\end{equation}
Hence, using the uniqueness, we obtain that the local heat kernel has the same values on the overlapping sets from $\mathcal{R}$ and $\mathcal{R}^\prime$. Unfortunately, after such procedure we decrease the maximum distance between $x$ and $y$ in each set. Some comments on the “maximum” covering can be found in the conclusion.

As a closing of the subsection, we would like to mention two interesting situations. The first one is a case of compact manifold, which leads to $\mathcal{R}$ with a finite number of elements. The second one is $\mathcal{M}=\mathbb{R}^d$ with $g^{\mu\nu}(x)=\delta^{\mu\nu}$ for all points. If the operator coefficients have an infinitely large convergence radius, then we can get only one set $\mathcal{R}=\{\mathcal{M}\}$.

\subsection{Special functions}

In this subsection, we describe a family of special functions applicable to decomposition of the local heat kernel. We follow notations from the recent paper \cite{iv-kh-2022}, in which the functions were introduced. Let us note that taking into account the results of Lemma \ref{lll2}, we assume that an open convex set $U$ is from $\mathcal{R}$. Moreover, we omit the arguments $x,y\in U$, if this does not cause any confusion.

Let us introduce a set of functions
\begin{equation}\label{fun1}
\Psi_k=\Delta^{1/2}\sum_{n=0}^{+\infty}
\frac{(-\sigma/2)^{n-k}a_n}{\Gamma(n-k+1)},
\end{equation}
where $k\in\mathbb{Z}$. Our definition corresponds to $\Psi_k^\omega$ with $\omega=\sigma(x,y)$ from formula (25) in \cite{iv-kh-2022}. These functions satisfy quite remarkable relation, see Lemma 1 in \cite{iv-kh-2022},
\begin{equation}\label{fun2}
A\Psi_k=(d/2-1-k)\Psi_{k+1}\,\,\,
\mbox{for}\,\,k\in\mathbb{Z},
\end{equation}
which can be used in different proofs and derivations.

Using the decomposition of the exponential from (\ref{K_0}) in powers of $\tau^{-k}$ and reversing the order of the sums, we obtain the following result.
\begin{lemma}\label{lll4}
Let $x,y\in U\in\mathcal{R}$, constructed above. Then the local heat kernel (\ref{K_0}) has the following representation in terms of functions (\ref{fun1})
\begin{equation}\label{fun3}
K(\tau)=\frac{1}{(4\pi\tau)^{d/2}}\sum_{k\in\mathbb{Z}}
\tau^{k}\Psi_k.
\end{equation}
\end{lemma}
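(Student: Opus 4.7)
The plan is to do exactly what the author hints at in the sentence preceding the lemma: substitute the Taylor expansion of the Gaussian prefactor into (\ref{K_0}), and then reorganise the resulting double sum by total power of $\tau$. Concretely, I would start from (\ref{K_0}) and replace
\begin{equation*}
e^{-\sigma/2\tau}=\sum_{m=0}^{+\infty}\frac{(-\sigma/2)^{m}}{m!}\,\tau^{-m},
\end{equation*}
so that $K(\tau)$ becomes $\Delta^{1/2}/(4\pi\tau)^{d/2}$ times a double sum indexed by $m,n\geqslant0$ with weight $\tau^{n-m}(-\sigma/2)^m a_n/m!$.

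Next, I would introduce the new summation index $k=n-m$ and swap the order of summation. For fixed $k\in\mathbb{Z}$, the constraint $m\geqslant0$ becomes $n\geqslant k$, while $n\geqslant0$ is automatic when $k\leqslant0$. Both cases can be unified by using the convention $1/\Gamma(n-k+1)=0$ at the non-positive integer poles of $\Gamma$, i.e.\ precisely for $n<k$. Under this convention the inner sum over $n$ reads
\begin{equation*}
\Delta^{1/2}\sum_{n=0}^{+\infty}\frac{(-\sigma/2)^{n-k}\,a_n}{\Gamma(n-k+1)},
\end{equation*}
which by (\ref{fun1}) is exactly $\Psi_k$. Collecting the prefactor $(4\pi\tau)^{-d/2}$ and the remaining $\tau^k$ yields (\ref{fun3}).

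The only subtle point is the justification of the rearrangement, since (\ref{K_0}) is only an asymptotic series for $\tau\to+0$ and the coefficients $a_n(x,y)$ themselves are understood through their covariant Taylor expansions in $y$ about $x$. I would read (\ref{fun3}) as an equality of formal series in $\tau$ with formal power-series coefficients in $\sigma$: in each bidegree only finitely many pairs $(m,n)$ contribute, so Fubini is unproblematic. Equivalently, one can truncate (\ref{K_0}) at order $\tau^{N}$ and expand the exponential to a matching order in $\sigma/\tau$; the two sides of (\ref{fun3}) then agree as genuine finite expressions, which together with Lemma \ref{lll1} fixes the $\Psi_k$'s uniquely. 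The main obstacle is therefore bookkeeping — carefully handling the boundary between $k\geqslant 0$ and $k<0$, which the $1/\Gamma$ convention absorbs — rather than anything analytic.
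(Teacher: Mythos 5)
Your proposal is correct and follows essentially the same route the paper indicates: expanding the exponential $e^{-\sigma/2\tau}$ in powers of $\tau^{-1}$, reindexing by the total power $k=n-m$, and reversing the order of summation, with the vanishing of $1/\Gamma(n-k+1)$ at non-positive integers absorbing the boundary between the two index regimes exactly as in definition (\ref{fun1}). Your added remark on reading the identity as an equality of (formal/asymptotic) series is a reasonable clarification of a point the paper leaves implicit.
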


For further discussion, we need to introduce two auxiliary objects for an even-dimensional case
\begin{equation}\label{fun4}
K_-(\tau)=\frac{1}{(4\pi)^{d/2}}\sum_{k=0}^{+\infty}
\frac{\Psi_{d/2-1-k}}{\tau^{1+k}},\,\,\,
K_+(\tau)=\frac{1}{(4\pi)^{d/2}}\sum_{k=0}^{+\infty}
\tau^{k}\Psi_{d/2+k}.
\end{equation}
It is quite interesting that both functions satisfy heat kernel equation $(\partial_\tau+A)K_\pm(\tau)=0$. This is possible due to the presence of the local zero modes $A\Psi_{d/2-1}=0$ in the even-dimensional case.

\begin{lemma}\label{lll5}
Let $x,y\in U\in\mathcal{R}$, constructed above, $\tau>0$, and $s\in\mathbb{R}$, such that $\tau+s>0$. Then the local heat kernel (\ref{K_0}) has the following structure
\begin{equation}\label{fun5}
K(\tau+s)=e^{-sA}K(\tau)=e^{s\partial_\tau}K(\tau).
\end{equation}
If the dimension $d$ is even, then $K(\tau)=K_-(\tau)+K_+(\tau)$, and
\begin{equation}\label{fun6}
K_\pm(\tau+s)=e^{-sA}K_\pm(\tau)=e^{s\partial_\tau}K_\pm(\tau).
\end{equation}
\end{lemma}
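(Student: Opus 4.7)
The plan is to combine the heat equation $(\partial_\tau + A)K = 0$ with Taylor's formula in the $\tau$-variable, and to exploit the explicit series representation from Lemma~\ref{lll4}. Since $A$ does not depend on $\tau$, iterating the heat equation gives $\partial_\tau^n K = (-A)^n K$ for every $n\geq 0$, so
\[
e^{s\partial_\tau}K(\tau) \;=\; \sum_{n=0}^{+\infty}\frac{s^n}{n!}\,\partial_\tau^n K(\tau) \;=\; \sum_{n=0}^{+\infty}\frac{(-sA)^n}{n!}K(\tau) \;=\; e^{-sA}K(\tau),
\]
which gives the second equality in (\ref{fun5}). The first equality, $K(\tau+s)=e^{s\partial_\tau}K(\tau)$, is Taylor's theorem: the function $\tau\mapsto K(\tau)$ is real-analytic on $(0,+\infty)$, since each term $\tau^{k-d/2}\Psi_k$ in (\ref{fun3}) extends holomorphically to $\mathbb{C}\setminus\{0\}$ and the convergence of the bilateral sum is already assumed in Lemma~\ref{lll4}. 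The hypothesis $\tau+s>0$ keeps the segment from $\tau$ to $\tau+s$ inside the domain of analyticity, so the Taylor expansion around $\tau$ reproduces $K(\tau+s)$.

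For the even-dimensional case, the decomposition $K=K_-+K_+$ is an algebraic rearrangement of (\ref{fun3}): group the terms with $k\leq d/2-1$ (writing $k=d/2-1-k'$ with $k'\geq 0$) into $K_-$ and those with $k\geq d/2$ (writing $k=d/2+k'$) into $K_+$, so that (\ref{fun4}) is reproduced. To verify $(\partial_\tau+A)K_\pm=0$, apply both operators term by term and substitute (\ref{fun2}); using $A\Psi_{d/2+k}=-(k+1)\Psi_{d/2+k+1}$ and $A\Psi_{d/2-1-k}=k\,\Psi_{d/2-k}$, the $\partial_\tau$-series and the $A$-series cancel after a shift of the summation index. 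The cancellation closes precisely because $A\Psi_{d/2-1}=0$ separates the two halves, so no residual ``pivot'' term remains. Once each $K_\pm$ is known to solve the heat equation, the Taylor/heat-equation argument of the first paragraph applies verbatim to each of them and yields (\ref{fun6}).

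The main technical obstacle is the analyticity step: to turn the formal Taylor series into a true identity of functions one must check that the Laurent-type series (\ref{fun3}), together with its termwise $\tau$-derivatives, converges uniformly in a complex neighborhood of the segment joining $\tau$ to $\tau+s$. For $K_+$ this is immediate since it is a genuine power series in $\tau$; for $K_-$ and the full $K$ it rests on the same convergence assumptions on the Seeley--DeWitt expansion already presupposed for Lemma~\ref{lll4}, together with the fact that the distance from any $\tau_0>0$ to the sole singularity at $\tau=0$ equals $\tau_0$ and must be controlled against $|s|$. The remaining algebraic rearrangement and the recursion-driven cancellation are routine bookkeeping once (\ref{fun2}) is in hand.
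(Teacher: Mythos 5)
Your derivation of the second equality in (\ref{fun5}) (iterating the heat equation to get $\partial_\tau^nK=(-A)^nK$) and your term-by-term check that $K_\pm$ solve the heat equation via (\ref{fun2}) are sound, and they run close to the paper's argument, only in the opposite order: the paper first establishes $K(\tau+s)=e^{-sA}K(\tau)$ by applying the exponential to representation (\ref{fun3}) with the help of (\ref{fun2}) and the binomial-type identity (\ref{fun9}), and then trades $A$ for $-\partial_\tau$ using the heat equation.

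The genuine gap is the case $s\geq\tau$, which the lemma explicitly allows since only $\tau+s>0$ is assumed. Your key sentence --- that the segment from $\tau$ to $\tau+s$ stays in the domain of analyticity, ``so the Taylor expansion around $\tau$ reproduces $K(\tau+s)$'' --- is false as stated: real-analyticity of $K$ on $(0,+\infty)$ only guarantees convergence of the Taylor series at $\tau$ within its radius of convergence, and because of the singularity at $\tau=0$ that radius is at most $\tau$. Already the prefactor $\tau^{-d/2}$ (or the single term $\Psi_{d/2-1}/\tau$ in $K_-$) gives $\sum_{n}\binom{-d/2}{n}(s/\tau)^n$, which diverges for $s>\tau$ even though $(\tau+s)^{-d/2}$ is finite; the same happens for the paper's identity (\ref{fun9}), which is only valid for $|s/\tau|<1$. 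You flag exactly this obstacle in your last paragraph (``must be controlled against $|s|$'') but never resolve it, so your argument proves (\ref{fun5}) and (\ref{fun6}) only for $|s|<\tau$. The missing step is the one the paper supplies: pick $N\in\mathbb{N}$ with $s/N<\tau$ and write $e^{-sA}=\prod_{i=1}^{N}e^{-(s/N)A}$ --- equivalently, analytically continue by successively re-expanding the Taylor series at the intermediate points $\tau+js/N$. Each factor falls under the $|s|<\tau$ case already proved, and each application increases the time argument, so the constraint remains satisfied at every step and the whole range $\tau+s>0$ is covered. With that iteration added, your proof closes; the rest (the rearrangement $K=K_-+K_+$ for even $d$ and the cancellation driven by $A\Psi_{d/2-1}=0$) is correct.
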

\begin{proof} Let us start from the first relation from (\ref{fun5}) under the assumption $|s|<\tau$. It follows from the straight application of the exponential $\exp(-sA)$ to the local heat kernel with the usage of relation (\ref{fun2}) and the following identity
\begin{equation}\label{fun9}
\frac{1}{\tau^{k}}\sum_{n=0}^{+\infty}\frac{\Gamma(k+n)(-s/\tau)^n}{\Gamma(k)\Gamma(n+1)}=\frac{1}{(\tau+s)^k}.
\end{equation}
The last series converges, because $|s/\tau|<1$. For $-k\in\mathbb{N}$ we obtain the standard binomial expansion.
Further, the case $s>\tau$ follows from the previous one, because we can take such $N\in\mathbb{N}$, that $s/N<\tau$. Hence, we can rewrite the exponential
as the product
\begin{equation}\label{fun10}
e^{-sA}=\prod_{i=1}^Ne^{-(s/N)A}
\end{equation}
and apply them $N$ times.

Then, the second equality in (\ref{fun5}) follows from the replacement of $A$ by $-\partial_\tau$ with the use of the heat equation from (\ref{tepl1}). Formulae from (\ref{fun6}) can be obtained in the same manner.
\end{proof}

Actually, using the last two lemmas, we can rewrite the local heat kernel in a more elegant form
\begin{equation}\label{fun7}
K(\tau)=e^{-\tau A}\delta_A\,\,\,
\mbox{with}\,\,\,\delta_A(x,y)=\lim_{\epsilon\to+0}K(x,y;\epsilon),
\end{equation}
where we have introduced a regularization of the $\delta$-function. It means that we can put the exponential $\exp(-\tau A)$ under the limit and apply Lemma \ref{lll5}.

\subsection{Late-time asymptotics}
The last subsection is devoted to the discussion of the late-time asymptotics $\tau\to+\infty$, one more non-trivial characteristic of the local heat kernel. In comparison with the short-time asymptotics $\tau\to+0$, which actually is embedded into definition (\ref{tepl1}), a form of the late-time behavior is not evident and depends on the local properties of the potentials. 

Indeed, as an example, we can consider $\mathbb{R}^d$ with $g^{\mu\nu}(x)=\delta^{\mu\nu}$ for all values of the argument. Let us construct the local heat kernel for the operator $-\partial_{x_\mu}\partial_{x^\mu}-v(x)$ in two disjoint open convex sets $V_1$ and $V_2$, in which the potential $v(x)$ is equal to the constants $c_1,c_2\in\mathbb{R}$, respectively. We assume that $c_1\neq c_2$. Hence, we obtain significantly different behavior of the local heat kernels
\begin{equation}\label{la5}
K(x,y;\tau)=\frac{e^{-|x-y|^2/4\tau}}{(4\pi\tau)^{d/2}}e^{c_i\tau}\,\,\,
\mbox{for}\,\,\,x_i,y_i\in V_i,\,\,\,i=1,2.
\end{equation}
Their asymptotics differ by an exponential factor. This simple example shows that we cannot introduce a unique type of the late-time behavior, as this was made in the short-time case (\ref{K_0}).

As a rule, the late-time asymptotics is studied in the context of the standard heat kernel, because in the case we have methods of the spectral theory of differential operators, which are quite powerful tools of the modern mathematical physics. It is useful to remember two popular ways for the mentioned problem statement: the first one \cite{103,lala8} gives estimates in terms of the lowest eigenvalue on compact manifolds, while the second one \cite{lala1,lala2} suggests to use ansatz of a special type.

Unfortunately, standard spectral methods do not work in the case of the local heat kernel, because we ignore boundary conditions and, as a result, we lose a well-posed spectral problem. However, we have a different tool, path integral, which leads to some results. Now, using notation and definitions from \cite{lala5,lala4,lala3,lala7,lala6,33}, we formulate some interesting relations.

Let $V\in\mathcal{R}$ and $(U_\alpha,\phi_\alpha)\in\mathcal{A}$, such that $V\in U_\alpha$. In this case we obtain a star-shaped set $\phi_\alpha(V)=\tilde{V}\subset\mathbb{R}^d$. As it was noted in the introduction, we assume that $g^{\mu\nu}(x)$, $B_\mu(x)$, and $v(x)$, are decomposable in the convergent Taylor series on $\tilde{V}$. Let us by hands expand the definitions on $\mathbb{R}^d$ as follows
\begin{equation}\label{la6}
g^{\mu\nu}(x),B_\mu(x),v(x)\,\,\mbox{on $\tilde{V}$}\to
\tilde{g}^{\mu\nu}(x),\tilde{B}_\mu(x),\tilde{v}(x)\,\,\mbox{on $\mathbb{R}^d$},
\end{equation}
such that both types of objects are equal to each other on $\tilde{V}$. 
Moreover, we require some additional stronger conditions: $\tilde{g}^{\mu\nu}(x)=\delta^{\mu\nu}$, and all the potentials are decomposable in the convergent Taylor series. Then, according to Theorems 4 and 5 from \cite{33}, we can write the following representation for the local heat kernel
\begin{equation}\label{la1}
K(x,y;\tau)=\frac{1}{(4\pi)^{d/2}}
\int_{\mathcal{W}_{y,x}}\mathcal{D}_\tau u\,e^{-S_\tau[u]}
P_t\exp\bigg[\int_0^\tau dt\,M_1(u(t))\bigg]
\,\,\,\mbox{with}\,\,\,
\int_{\mathcal{W}_{0,0}}\mathcal{D}_\tau u\,e^{-S_\tau[u]}=\tau^{-d/2},
\end{equation}
where we have used
\begin{equation}\label{la2}
S_\tau[u]=\frac{1}{4}\int_0^\tau dt\,\dot{u}_\mu(t)\dot{u}^\mu(t),\,\,\,
M_s(u(t))=-\dot{u}^\mu(t)\tilde{B}_\mu(u(t))+s\tilde{v}(u(t)),
\end{equation}
$\mathcal{W}_{y,x}$ is a set of continuous paths in $\mathbb{R}^d$ with the beginning at $y$ and end at the point $x$, and the dot $\dot{u}$ means the derivative $du(t)/dt$. We draw the attention that in general case $M(\cdot)$ is a matrix-valued function, because of what we have obtained the time-ordered $P_t$ exponential instead of the ordinary one.

Now we transform representation (\ref{la1}) to emphasize its dependence on the parameter $\tau$. Firstly, making the following change $u(t)\to u(t)+y+t(x-y)/\tau$, we move on from the set $\mathcal{W}_{y,x}$ of continuous paths to the set $\mathcal{W}_{0,0}$ of continuous loops. Then, we make the chain of changes 
\begin{equation}\label{la4}
	t\to t\tau,\,\,\,
u(\tau t)\to\sqrt{\tau}u(t),\,\,\,\mbox{and}\,\,\,\mathcal{D}_\tau u\to\mathcal{D}_1 u,
\end{equation}
after which we obtain one more representation
\begin{equation}\label{la3}
K(x,y;\tau)=
\frac{e^{-|x-y|^2/4\tau}}{(4\pi\tau)^{d/2}}
\int_{\mathcal{W}_{0,0}}\mathcal{D}_1 u\,e^{-S_1[u]}
P_t\exp\bigg[\int_0^1 dt\,M_\tau\big(\sqrt{\tau}u+y+t(x-y)\big)\bigg],
\end{equation}
This formula is convenient, because it is factorized. Indeed, its first part coincides with the first one from the local heat kernel (\ref{K_0}), while the second factor, the path integral, represents the sum of the Seeley--DeWitt coefficients. At the same time, all the dependence on the parameter $\tau$ is included into the time-ordered exponential. For a simple check, we can note that
\begin{equation}\label{la11}
P_t\exp\bigg[\int_0^1 dt\,M_\tau\big(\sqrt{\tau}u+y+t(x-y)\big)\bigg]\bigg|_{\tau=0}=a_0(x,y).
\end{equation}

Further, we make one more change in the following form $u(t)\to u(t)-(y-z)/\sqrt{\tau}-t(x-y)/\sqrt{\tau}$, where $z\in\mathbb{R}^d$ is an auxiliary point. Hence, we obtain
\begin{equation}\label{la33}
	K(x,y;\tau)=
	\frac{1}{(4\pi\tau)^{d/2}}
	\int_{\mathcal{W}_{(y-z)/\sqrt{\tau},(x-z)/\sqrt{\tau}}}\mathcal{D}_1 u\,e^{-S_1[u]}
	P_t\exp\bigg[\int_0^1 dt\,M_\tau\big(\sqrt{\tau}u+z\big)\bigg],
\end{equation} 
on the basis of which we can formulate one more result.
\begin{lemma}\label{lll6}
Let us understand the local heat kernel (\ref{K_0}) as the functional of the connection components $\tilde{B}_\mu(\cdot)$ and the potential $\tilde{v}(\cdot)$. Then, under the conditions described above, we can write
\begin{equation}\label{la34}
K(x,y;\tau)\big[\tilde{B}_\mu(\cdot),\tilde{v}(\cdot)\big]=\tau^{-d/2}
K\big((x-z)/\sqrt{\tau},(y-z)/\sqrt{\tau};1\big)\big[\sqrt{\tau}\tilde{B}_\mu\big(\sqrt{\tau}(\cdot)+z\big),
\tau\tilde{v}\big(\sqrt{\tau}(\cdot)+z\big)\big],
\end{equation}
and, in particular, when $z=y=x$, we get
\begin{equation}\label{la35}
K(x,x;\tau)\big[\tilde{B}_\mu(\cdot),\tilde{v}(\cdot)\big]=\tau^{-d/2}
K(0,0;1)\big[\sqrt{\tau}\tilde{B}_\mu\big(\sqrt{\tau}(\cdot)+x\big),
\tau\tilde{v}\big(\sqrt{\tau}(\cdot)+x\big)\big].
\end{equation}
\end{lemma}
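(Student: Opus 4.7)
The plan is to read both sides of (\ref{la34}) as instances of the path-integral representation (\ref{la33}) and match them term by term. On the left, (\ref{la33}) expresses $K(x,y;\tau)$ as a prefactor $1/(4\pi\tau)^{d/2}$ times a path integral over $\mathcal{W}_{(y-z)/\sqrt{\tau},(x-z)/\sqrt{\tau}}$, with the whole $\tau$-dependence of the integrand sitting inside the time-ordered exponential through $M_\tau(\sqrt{\tau}u+z)$. On the right, I would apply (\ref{la33}) a second time to $K\big((x-z)/\sqrt{\tau},(y-z)/\sqrt{\tau};1\big)$, but viewed as a functional of the \emph{rescaled} potentials $\tilde{B}'_\mu(\cdot)=\sqrt{\tau}\tilde{B}_\mu(\sqrt{\tau}(\cdot)+z)$ and $\tilde{v}'(\cdot)=\tau\tilde{v}(\sqrt{\tau}(\cdot)+z)$, choosing the auxiliary shift point inside that second use of (\ref{la33}) to be $z'=0$. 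This produces a path integral over the same loop space $\mathcal{W}_{(y-z)/\sqrt{\tau},(x-z)/\sqrt{\tau}}$, with the same free action $S_1[u]$ and prefactor $1/(4\pi)^{d/2}$.

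It then remains to show that the two time-ordered exponentials coincide. By (\ref{la2}), the integrand on the right is built from $M'_1(u(t))=-\dot u^\mu(t)\tilde{B}'_\mu(u(t))+\tilde{v}'(u(t))$, which after inserting the primed potentials becomes $-\sqrt{\tau}\dot u^\mu(t)\tilde{B}_\mu(\sqrt{\tau}u(t)+z)+\tau\tilde{v}(\sqrt{\tau}u(t)+z)$. On the other hand, $M_\tau(\sqrt{\tau}u(t)+z)$ is (\ref{la2}) applied to the composed path $w(t)=\sqrt{\tau}u(t)+z$; since $\dot w^\mu=\sqrt{\tau}\dot u^\mu$, this gives the same expression. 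The integrands therefore agree, the ratio of prefactors supplies the explicit $\tau^{-d/2}$, and (\ref{la34}) follows. The specialisation (\ref{la35}) is then immediate: setting $z=y=x$ collapses both endpoints $(y-z)/\sqrt{\tau}$ and $(x-z)/\sqrt{\tau}$ to the origin.

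The only genuinely delicate point I expect is the bookkeeping for the rescaled connection: one has to see that the explicit $\sqrt{\tau}$ factor placed in front of $\tilde{B}'_\mu$ is precisely the chain-rule factor produced when $\dot w$ is rewritten in terms of $\dot u$, and that the factor $\tau$ attached to $\tilde{v}'$ matches the subscript $s=\tau$ in $M_\tau$. Since the real scaling work — the three substitutions listed in (\ref{la4}) together with the loop-closing shift of $u$ — is already encoded in the derivation of (\ref{la33}), no new path-integral analysis is required; the lemma reduces to this identification.
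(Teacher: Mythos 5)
Your proposal is correct and follows essentially the same route as the paper: Lemma \ref{lll6} is stated there precisely ``on the basis of'' the representation (\ref{la33}), and your identification of the two sides of (\ref{la34}) as instances of that formula — including the chain-rule bookkeeping $\dot w^\mu=\sqrt{\tau}\dot u^\mu$ that turns $M_\tau(\sqrt{\tau}u+z)$ into $M_1$ of the rescaled potentials, and the prefactor ratio giving $\tau^{-d/2}$ — is exactly the intended argument. (The paper additionally sketches an independent check via the scaling (\ref{la36})--(\ref{la37}) of the diagonal Seeley--DeWitt coefficients, but that is supplementary to the same path-integral identification you use.)
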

The last formula has a quite natural structure and can be obtained straight from the formula (\ref{K_0}) with the use of an appropriate scaling of the coordinates, the connection components, and the potential, 
in the Seeley--DeWitt coefficients. Indeed, using the structure of the coefficients on the diagonal, see \cite{33,f7}, and their locality and applying  the following chain of changes
\begin{equation}\label{la36}
	\tilde{B}_\mu(\cdot)\to\tilde{B}_\mu\big((\cdot)/\sqrt{\tau}\big)/\sqrt{\tau},\,\,\,
	\tilde{v}(\cdot)\to\tilde{v}\big((\cdot)/\sqrt{\tau}\big)/\tau,\,\,\,
	x\to\sqrt{\tau} x,
\end{equation}
we obtain the corresponding change of the coefficient.
\begin{equation}\label{la37}
a_k(x,x)\to a_k(x,x)/\tau^k\,\,\,\mbox{for}\,\,\,k\in\mathbb{N}\cup\{0\}.
\end{equation}

Let us formulate some examples. The first case is connected with quite popular illustrative situation, when the functions $\tilde{B}_\mu(x)=\xi_\mu$ and $\tilde{v}=c$ actually do not depend on the variables and, moreover, Abelian. In this case we get
\begin{equation}\label{la7}
\int_0^1 dt\,M_\tau\big(\sqrt{\tau}u+y+t(x-y)\big)=-(x-y)^\mu\xi_\mu+\tau c
\end{equation}
and
\begin{equation}\label{la8}
K(x,y;\tau)=
\frac{e^{-|x-y|^2/4\tau}}{(4\pi\tau)^{d/2}}e^{-(x-y)^\mu\xi_\mu+\tau c}\,\,\,
\mbox{for all}\,\,\,x,y\in\mathbb{R}^d\,\,\,\mbox{and}\,\,\,\tau>0.
\end{equation}

The second example allows us to make one simple estimate. Indeed, let the components of the connection 1-form are equal to zero $\tilde{B}_\mu=0$, while the potential $\tilde{v}(x)$ is scalar and satisfies the equality $v(x)<c$ for a constant $c$ and all values of the argument $x$. Then we obtain
\begin{equation}\label{la9}
P_t\exp\bigg[\int_0^1 dt\,M_\tau\big(\sqrt{\tau}u+y+t(x-y)\big)\bigg]<e^{\tau c}
\end{equation}
and
\begin{equation}\label{la10}
K(x,y;\tau)<\frac{e^{-|x-y|^2/4\tau}}{(4\pi\tau)^{d/2}}e^{\tau c}\,\,\,
\mbox{for all}\,\,\,x,y\in\mathbb{R}^d\,\,\,\mbox{and}\,\,\,\tau>0.
\end{equation}

\section{Conclusion}

In this paper, we have studied the local heat kernel (\ref{K_0}), which is actually the $\delta$-part of the standard heat kernel. We have clearly shown distinctive features of both objects and their connection. We have derived new relations and representation formulae for the local heat kernel, gave a number of demonstrative examples, and discussed such properties as globality, uniqueness, and behavior of asymptotics in different regions.

Additionally, it would be convenient to give one comment on the continuation of the local heat kernel. As we have shown, we can construct the convex covering, which allows us to transfer the object from one set to another one. But this covering is not unique. And the natural question arises: how can we construct the set with the largest neighborhoods to expand the definition of the local heat kernel $K(x,y;\tau)$ as far from the diagonal $x=y$ as possible? Of course, this new covering should be convex according to the definition formulated above.

Let us note that the local heat kernel in the fixed open convex set $U$ does not form the standard $C_0$-semigroup structure, because
\begin{equation}\label{fun11}
\int_{\overline{U}}d^dz\,\sqrt{g(z)}K(x,z;\tau_1)K(z,y;\tau_2)\neq K(x,y;\tau_1+\tau_2)
\end{equation}
in general case for $\tau_1,\tau_2>0$. This equality contains boundary terms, which follows after the integration by parts.

\vskip 5mm
\textbf{Acknowledgments.} Author is supported by the Ministry of Science and Higher Education of the Russian Federation, grant  075-15-2022-289, and by the Foundation for the Advancement of Theoretical
Physics and Mathematics “BASIS”, grant “Young Russian Mathematics”. Also, the author would like to thank N.V. Kharuk and D.V. Vassilevich for useful comments.

\end{document}